\newtheorem{theorem}{Theorem}
\newtheorem{definition}[theorem]{Definition}
\newtheorem{proposition}[theorem]{Proposition}
\newtheorem{lemma}[theorem]{Lemma}
\newtheorem{algorithm}[theorem]{Algorithm}
\newtheorem{assumption}[theorem]{Assumption}
\newtheorem{result}[theorem]{Result}
\newtheorem{example}[theorem]{Example}
\newtheorem{diagnostic}[theorem]{Diagnostic}
\newtheorem{heuristic}[theorem]{Heuristic}
\renewcommand{\[}{\begin{equation}}
\renewcommand{\]}{\end{equation}}
\def\<#1\>{\begin{align}#1\end{align}}
\def\?#1\?{\begin{gather}#1\end{gather}}
\begin{document}

\twocolumn[
\aistatstitle{Asynchronous Gibbs sampling}

\aistatsauthor{Alexander Terenin \And Daniel Simpson \And David Draper}

\aistatsaddress{Imperial College London \And University of Toronto \And University of California, Santa Cruz}
]

\begin{abstract}
Gibbs sampling is a Markov Chain Monte Carlo (MCMC) method often used in Bayesian learning.
MCMC methods can be difficult to deploy on parallel and distributed systems due to their inherently sequential nature.
We study \emph{asynchronous Gibbs sampling}, which achieves parallelism by simply ignoring sequential requirements.
This method has been shown to produce good empirical results for some hierarchical models, and is popular in the topic modeling community, but was also shown to diverge for other targets.
We introduce a theoretical framework for analyzing asynchronous Gibbs sampling and other extensions of MCMC that do not possess the Markov property.
We prove that asynchronous Gibbs can be modified so that it converges under appropriate regularity conditions -- we call this the \emph{exact asynchronous Gibbs} algorithm.
We study asynchronous Gibbs on a set of examples by comparing the exact and approximate algorithms, including two where it works well, and one where it fails dramatically.
We conclude with a set of heuristics to describe settings where the algorithm can be effectively used.
\end{abstract}

\section{Introduction} \label{sec:intro}

Bayesian methods have found increased application during the last two decades in the scientific community, as well as in technology, business, public policy, and other settings.
Unfortunately, Bayesian computation has become increasingly difficult as data sets and models have grown in size and complexity.

In particular, one of the standard approaches -- Markov Chain Monte Carlo (MCMC) \cite{metropolis53, hastings70} -- often does not scale well, either with data size or model complexity.
This has been addressed in recent work by deploying MCMC on parallel and distributed systems such as GPUs and compute clusters -- here we focus on the latter setting.

To use MCMC on a compute cluster efficiently, recent work has focused both on modifying the system's architecture to better suit MCMC \cite{wei15, ho13} and on modifying MCMC to better suit the system \cite{newman09}.
The simplest approach to parallelization -- running multiple parallel chains -- requires each chain to burn in individually, which limits performance.
One way to make Gibbs sampling \cite{geman84} better suited to a compute cluster is to run it \emph{asynchronously}, by sampling the next full conditional without waiting for previous ones to finish -- this is illustrated in Figure \ref{fig:ags-sketch}.
Similar approaches have recently been proposed for distributed optimization \cite{niu11}.
Unfortunately, for Gibbs sampling this creates a stochastic process that does not possess the Markov property: empirical results regarding the behavior of such processes have largely outpaced their theoretical understanding.

Asynchronous Gibbs has found widespread use and industrial deployment, especially in the natural language processing community for the Latent Dirichlet Allocation model \cite{blei03}, where it was first proposed by \textcite{newman09}, and analyzed by \textcite{ihler12}.
However, \textcite{johnson13} exhibited an explicit counterexample demonstrating that asynchronous Gibbs sampling can diverge, and analyzed its behavior for Gaussian targets.
Other authors have analyzed the algorithm in settings where conditional independence allows the Markov property to be recovered, thereby making asynchronous and synchronous execution equivalent \cite{gonzalez11, neiswanger13, terenin19a}.

The work most related to ours, and which originally appeared concurrently to our own, is that of \textcite{desa16}.
They showed -- assuming Dobrushin's condition \cite{pedersen07} -- that the asymptotic bias and mixing time of asynchronous Gibbs can be bounded.
This is the only result we are aware of that holds for general targets with fully asynchronous execution which doesn't reduce to the synchronous case.
We discuss the relationship between this important and largely complementary viewpoint with our approach in Section \ref{sec:conclusion}.

In this work, we analyze asynchronous Gibbs by defining the \emph{exact asynchronous Gibbs} algorithm that includes a correction step, which we prove converges even if executed asynchronously.
This allows study of the approximate algorithm in use by practitioners, through comparison with its convergent counterpart.
Our framework allows for convergence analysis of MCMC algorithms in settings that do not possess the Markov property, and gives a general technique for the construction of convergent algorithms.
Our contribution and focus is primarily theoretical, but we also showcase the method on a number of examples.

\section{Asynchronous Gibbs Sampling} \label{sec:async}

\begin{figure}[t!]
\[
\nonumber
\hspace*{4.5ex}
\begin{tikzcd}[column sep=3.5ex]
\mathllap{w_1: \ } x_{10} \ar{r} \ar{dr} & x_{11} \ar{r} \ar{ddr} & x_{12} \ar{r} & x_{13} \ar{r} \ar{ddr} \ar{drr} & x_{14} \ar{r} & x_{15}
\\
\mathllap{w_2: \ } x_{20} \ar{r} & x_{21} \ar{r} \ar{ur} & x_{22} \ar{r} \ar{dr} & x_{23} \ar{r} \ar{urr} & x_{24} \ar{r} & x_{25}
\\
\mathllap{w_3: \ } x_{30} \ar{r} \ar{urr} & x_{31} \ar{r} \ar{uur}  & x_{32} \ar{r} & x_{33} \ar{r} \ar{ur} & x_{34} \ar{r} & x_{35}
\end{tikzcd}
\]
\caption{One possible sampling path of asynchronous Gibbs sampling. Here, workers $w_1, w_2, w_3$ sample values and transmit them to one another. Communication is not instantaneous -- not all samples are transmitted, and not all transmissions are received.} \label{fig:ags-sketch}
\end{figure}

Asynchronous Gibbs sampling is an algorithm that modifies Gibbs sampling to make its implementation on a compute cluster more efficient.
We present it in this section informally using an \emph{actor model} \cite{hewitt73} definition of parallelism -- for an overview of the algorithm in a simpler setting, see \textcite{terenin17b}.
Formal analysis is given in Appendix \ref{apdx:conv}.
We now introduce notation.

\1[(1)] $w_i$: a worker capable of (a) performing computations, (b) transmitting messages to other workers, and (c) receiving messages from other workers.
\2 $\pi$: a probability measure with density $f(x)$ defined on $X = \R^d$ from which we wish to sample.
\3 $\m{X}^k$: a matrix where each entry $x_{ij}$ represents component $j$ of the parameter vector $\v{x}_i$ on worker $w_i$.
This gives the total state of computation on all workers at time $k$.
We generally suppress $k$ from the notation.
\0

The algorithm proceeds as follows.

\begin{algorithm}[Asynchronous Gibbs sampling] \label{alg:ags}
For all workers, repeat the following.
\1[(i)] Select a variable $x_{ij}$ from some subset of $\v{x}_i$ at random with fixed probability and update it using the full conditional distribution $x_{ij} \given \v{x}_{i,-j}$ of Gibbs sampling.
\2 Transmit $x_{ij}$ to other workers, if possible given network limitations.
\3 Update the state $\v{x}_i$ using any transmissions received from other workers by overwriting previous parameter values in arbitrary order.
\0
\end{algorithm}

This algorithm is illustrated in Figure \ref{fig:ags-sketch}.
We assume that every worker can either update or receive each component of $\v{x}_i$, and that each worker choses to do so in a random-scan fashion.
Each worker's state is based both on values it has sampled and on values it has received from other workers -- because of network delays, these may be out of date.
Thus asynchronous Gibbs with network transmission that includes the possibility of delays is not a Markov chain.

The algorithm allows for messages to be dropped or not sent entirely, making it fault-tolerant with respect to network traffic.
Since the number of possible messages grows quadratically with the number of workers, most messages will not be sent or received.
We focus here on the case where all transmitted variables are sampled via Gibbs steps.

Can anything be proven about such a process?
To study this question, we now consider two ways in which workers process updates received from other workers.

\1[(a)] \label{item:aag} \emph{Asynchronous Gibbs}: accept all updates.

\2 \label{item:eag} \emph{Exact asynchronous Gibbs}: worker $w_i$ with current state $\v{x}_i$ accepts updated parameter $x'_j$ proposed by worker $w_s$, whose state when $x'_j$ was randomly updated was $\v{x}_s$, with Metropolis-Hastings (MH) probability
\[
\label{eqn:mh-acceptance}
\min\cbr{1, \frac{f(\v{x}'_i) f(\v{x}_i \given \v{x}_s)}{f(\v{x}_i)f(\v{x}'_i \given \v{x}_s)}}
\]
where $\v{x}'_i$ is defined to be $x'_{ij} = x'_{sj}$ for component $j$ transmitted by $w_s$ and $x'_{ij} = x_{ij}$ for all other components.
\0

Note that this update rule entails transmitting both the sender's newly updated parameter, and its full state given at the time this parameter was updated.

Here, \eqref{item:aag} is the algorithm considered by \textcite{ihler12}, \textcite{desa16}, and other authors -- since it does not always converge, we refer to it as the \emph{approximate algorithm}.
We study it by examining its relationship with \eqref{item:eag}, which we call \emph{exact} because it turns out that, under appropriate regularity conditions, it converges.

\begin{theorem}[Asynchronous convergence, informal]
Let $\m{X}^k$ be an exact asynchronous Gibbs sampler.
Assume temporarily that communication is instantaneous, and suppose in this setting that $\m{X}^k$ converges sufficiently quickly on every worker as $k\to\infty$.
Then $\m{X}^k$ converges asynchronously in the sense of \textcite{baudet78, bertsekas83, frommer00}.
\end{theorem}

\begin{proof}
Appendix \ref{apdx:conv}.
\end{proof}

The argument proceeds by first studying distributed Gibbs sampling in the simpler setting where communication is instantaneous, there are no asynchronous delays, and $\m{X}^k$ again possesses the Markov property.
Here, we assume that both the underlying Gibbs sampler on each worker and communication between workers are sufficiently regular that $\m{X}^k$ converges to stationarity at sufficient rate.
We then use the general theory of \textcite{baudet78, bertsekas83, frommer00} to show that convergence under instantaneous communication implies asynchronous convergence, defined in an appropriate sense, irrespective of most details involving the asynchronous delays.

Having established this result, we proceed to study circumstances under which exact asynchronous Gibbs can be implemented.
We then study asynchronous Gibbs sampling by studying how its trajectories differ from those of exact asynchronous Gibbs.
To do so, we implement the algorithm in \emph{Scala}, a compiled language interoperable with \emph{Java} and well suited to parallel and distributed environments.
Network communication is handled by \emph{Akka}, an \emph{actor model} framework designed for large-scale distributed applications.

\subsection{Distributed computation in exchangeable latent variable models} \label{sec:exch}

If we are interested in sampling from an exchangeable latent-variable hierarchical Bayesian model, the posterior ratio used in the MH acceptance test in exact asynchronous Gibbs simplifies to an expression involving only one data point -- this means that this ratio can be evaluated locally to each worker in a parallel environment.
In particular, this allows us to partition the data and latent variables among the workers, and update a copy of the upper-level non-latent variables locally on each worker.
To illustrate, consider the model
\< \label{hierarchical-model-1}
y_i &\given \nu_i \~ A(\nu_i)
&
\nu_i \given \theta &\~ B(\theta)
&
\theta &\propto \tau(\theta)
\>
in which $A$ and $B$ are arbitrary distributions and $y_i$ are data points.
We can define a Gibbs sampler of the form
\< \label{hierarchical-gibbs-1}
\nu_i &\given \theta, y_i \~ C(\theta, y_i)
&
\theta \given \nu_1, ..,\nu_n &\~ D(\nu_1, ..,\nu_n)
\>
where $C$ and $D$ are some distributions and $n$ is the number of data points.
Assume that we can sample from $C$ directly.
Now define an asynchronous Gibbs sampler in which all workers transmit the values of their corresponding $\nu_i$ but never transmit $\theta$.
Consider a transmitted update from $\nu_j$ to $\nu'_j$.
Let $q$ be the full conditional proposal distribution on the worker that sent $\nu'_j$, and assume that this worker transmits the parameters of that distribution along with $\nu'_j$.
Since $q$ is a full conditional distribution, it does not depend on $\nu_j$  or the previous value of $\nu'_j$ on the transmitting worker.
The MH acceptance probability takes the form
\?
\min\cbr[4]{1, \frac{f(\theta, \nu'_j \given y_j) \sbr{ \prod_{i \neq j} f(\theta, \nu_i \given y_i) } q(\nu_j)}{f(\theta, \nu_j \given y_j) \sbr{ \prod_{i \neq j} f(\theta, \nu_i \given y_i) } q(\nu'_j)}} =
\\
= \min\cbr[4]{1, \frac{f(\theta, \nu'_j \given y_j) \, q(\nu_j)}{f(\theta, \nu_j \given y_j) \, q(\nu'_j)}}
\?
where $f$ is the density of the full conditional distribution in question and $q$ is the Hastings term.
Thus we can carry out the evaluation using only one data point.

If data is stored in a distributed fashion and $y_j$ is not available on other workers, we can transmit it over network along with $\nu'_j$.
If $\nu_j$ is also not available on other workers, but the latent variables $\nu_j$ form a non-overlapping partition among the workers, then we can transmit $(\nu'_j, \nu_j, y_j, q)$, because $\nu_j$ can only be updated on other workers through communication.
This situation occurs in some problems where parameters -- such as $\theta$ in Equation (\ref{hierarchical-model-1}) -- that are located at the top of a hierarchical model may depend on $\nu_j$ only through sufficient statistics, and where storing $\nu_j$ for all $j$ on every worker is thus unnecessary.

These details illustrate the flexibility that asynchronous Gibbs sampling gives the user in handling large distributed data sets.
They also show that implementing exact asynchronous Gibbs comes with substantial additional communication costs.
If the data points $y_j$ are sufficiently large, transmitting them may be too expensive.
As a compromise, we instead recommend computing and storing the MH ratios at random with small probability, and using them as a convergence diagnostic.
Remarkably, we find for many models the MH ratio is close enough to 1 sufficiently often that the MH correction does nothing the vast majority of the time, and so the approximate algorithm yields good numerical results.
This is not always the case: a target distribution where this fails is showcased in Section \ref{sec:jacobi}.
We now illustrate the algorithm on a set of examples.

\begin{figure*}
\includegraphics{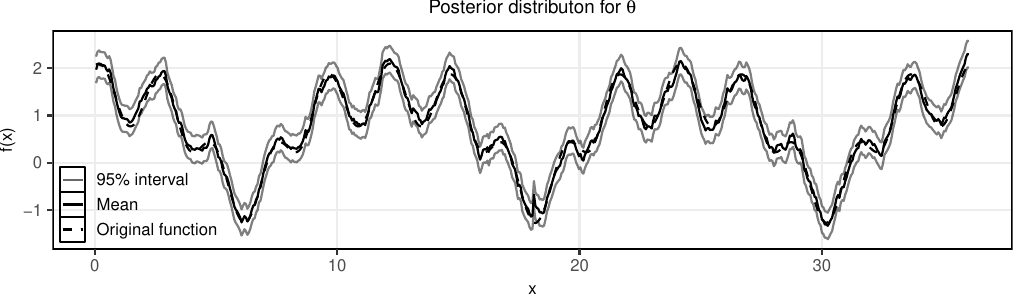}
\caption{Partial subset of $\v{\theta}$ for two workers, split at center, in the Gaussian process example of Section \ref{sec:ex2}.}
\label{gpr-results}
\end{figure*}

\section{Examples} \label{sec:examples}

\subsection{Gaussian process regression: a highly simplified spatial model with $n = 71,500$} \label{sec:ex2}

In this example, originally due to \textcite{neal97}, we used the algorithm to sample from the posterior distribution arising from a simple Gaussian process regression problem.
This example is far too simple for use in a real spatial Bayesian learning problem -- rather, we present it as a way to study how approximate asynchronous Gibbs sampling can be used for computation at scale.
Our goal was to reconstruct the function
\[
\tilde{f}(x) = 0.3 + 0.4x + 0.4\sin(2.7x) + \frac{1.1}{1 + x^2}
\]
defined for $x \in [-3, 3]$, and reflected and copied around the lines $x = 3, 9, ..$, and $x = -3, -9, ..$, in such a way that $\tilde{f}(x)$ becomes periodic with period 6 and is continuous everywhere.
To simplify our example, we assumed that our data lives on a grid with equal spacing of 0.06 (i.e., $x_1 = 0, x_2 = 0.06, x_3 = -0.06, ..$).
To generate the data, we added Gaussian white noise with standard deviation 0.2.
Our model for reconstructing this function is
\<
y_i &= f(x_i) + \eps_i\! 
& 
\!f(x_i ) &\~[GP](0,k)\!
&
\!\eps_i \~[N](0, \sigma^2)
.
\>
Here $i = 1, .., n = 71,\!500$ with $x$ on $[-2,\!145, \ 2,\!145)$.
For simplicity, we selected a Gaussian process with constant mean function $\mu$ and exponential covariance function $k(x,x') = \tau^2 \exp(- \phi| x - x'|)$ with hyperiors
\< 
\nonumber
\mu &\~[N] (a_\mu, b_\mu) 
&
\sigma^2 & \~[IG](a_\sigma, b_\sigma) 
&
\tau^2 & \~[IG](a_\tau, b_\tau) 
.
\>
By introducing latent variables $\theta_i$ corresponding to each data point, the model can be expressed as
\<
y_i & \given \theta_i, \sigma^2 \~[N] (\theta_i, \sigma^2) 
&
\v\theta &\~[N] \del{\mu \v{1}, \m{K}(\phi)}
\>
where $K_{ij}(\phi) = k(x_i,x_j)$.
By conjugacy, this yields inverse gamma posteriors for $\sigma^2$ and $\tau^2$, a Gaussian posterior for $\mu$, and a multivariate Gaussian posterior for $\v\theta$.
Since $\phi$ is non-conjugate and unidentifiable in the presence of $\tau^2$, to simplify our example we fixed it at $0.5$, which gives an interpretable length scale for the given problem.

If $n$ is large, block sampling from this posterior is intractable because it requires the frequent inversion of two $(n \x n)$ matrices.
It is possible to integrate $\v\theta$ out of the model, but this does not avoid large matrix inversion.
To avoid these difficulties and focus attention on those aspects of the computational problem most relevant to asynchronous Gibbs sampling, we use the assumption of an evenly-spaced grid together with special properties of the exponential covariance to develop a scheme for approximate closed-form analytic matrix inversion.
Details may be found in Appendix \ref{apdx:matrixinv}

With standard Gibbs, there are too many full conditionals to sample for the chain to produce useful output in reasonable time.
Asynchronous Gibbs lets us parallelize this computation.
In this example we used 143 workers with 1 CPU each.
Each worker was responsible for 500 values of $\v\theta$, each different from those handled by the other workers, and for $(\mu, \sigma^2, \tau^2)$.
We started the algorithm from low-probability initial values $\mu = 10, \sigma^2 = 10, \tau^2 = 10, \v\theta = \v{0}$.
The algorithm converged rapidly, producing approximately 10,000 samples per worker in around 20 minutes.

In Figure \ref{gpr-results} we plot a slice of the data, together with the correct solution.
As noted above, our matrix inversion approximation scheme is inaccurate around the edges of each slice of $\v\theta$ -- this can be seen in the middle of Figure \ref{gpr-results} -- and hence these values are not as accurate as those elsewhere.
The algorithm converged in an analogous fashion for all other slices of the data.
We conclude that asynchronous Gibbs sampling produces reasonable output for the given large-scale Gaussian process model.

\subsection{Mixed-effects regression: a complex hierarchical model with $n = 1,000,000$} \label{sec:ex3}

\begin{table}[b]
\begin{center}
\begin{tabular}{l r r r}
\hline
Algorithm & Workers & Threads & Runtime \\
\hline
Sequential-scan & 1 & 8 & 12 hours \\
Asynchronous & 20 & 160 & 1 hour \\
\hline
\end{tabular}
\end{center}
\caption{Runtime and degree of parallelism for 1000 Monte Carlo iterations per worker, in the hierarchical mixed-effects regression example of Section \ref{sec:ex3}.} \label{tbl:vvb-runtime}
\end{table}

The following model, due to \textcite{vonbrzeski15}, was used in a large-scale decision-theoretic analysis of product updates at eBay Inc.
Because users choose when to update to the latest version of the product, analysis of product updates is done not by controlled experiment but by observational study, and causal inference is difficult.
In particular, it is necessary to control for the \emph{early-adopter effect}, in which the behavior of the response is correlated with how quickly a user adopts the treatment after release.
To adjust for this effect, a Bayesian hierarchical mixed-effects regression model was selected.
Since we are primarily interested in the computational aspects of this problem, we omit further discussion of the particular model and evaluation of its results -- such discussion can be found in the original publication \cite{vonbrzeski15}.

A variety of different data sets have been used with this model -- the data set that we employed consists of $n = 1,000,000$ users.
The model can be written as
\?
\v{y}_i = \m{F}_i \v\beta_i + \m{W}_i \v\gamma + \v\eps_i
\\
\v\beta_i \given \v\mu, \m\Sigma \~[N](\v\mu, \m\Sigma)
\\
\v\eps_i \given \nu \~[N](\v{0}, \nu \, \m{I})
.
\?
The data set consists of $\v{y}_i: (T - p) \x 1$, $\m{F}_i: (T-p) \x d$, and $\m{W}_i: (T-p) \x (T-p)$.
The parameters are $\v\beta_i: (d \x 1 )$, $\v\gamma: (T-p) \x 1$, $\v\mu: (d \x 1)$, $\m\Sigma: (d \x d)$, and $\nu: (1\x 1)$, with the following priors:
\<
\v\mu & \~[N] (\v{0}, \kappa_\mu \m{I}_d)
&
\m\Sigma & \~[IW] (d+1, \m{I})
\\
\v\gamma & \~[N]  (\v{0}, \kappa_\gamma \m{I}_{T-p})
&
\nu & \~[IG](\epsilon / 2, \epsilon/2)
.
\>
Here $i = 1,..,n$ indexes individual data points (eBay users), $\v{y}_i$ is a vector of values representing customer satisfaction for user $i$ over time (aggregated to the weekly level), $\m{F}_i$ and $\m{W}_i$ are user-specific matrices of known constants (fixed effects), $d$ is the length of the random-effects vector, $T = 52$ is the number of weeks of data for each user, $p$ is the number of lags of autoregression in the model (typically no more than 5), and $\kappa_{\mu}, \kappa_{\gamma}, \epsilon$ are fixed hyperparameters.

The full conditionals for $\v{\mu}$, $\v{\gamma}$, $\m\Sigma$, and $\nu$ involve the full conditional sufficient statistics
\<
\v{\bar{\beta}} &= \frac{1}{n} \sum_{i=1}^n \v\beta_i
&
\m{S} &= \sum_{i=1}^n (\v\beta_i - \v{\mu}) (\v\beta_i - \v{\mu})^T
\\
\v{g} &= \sum_{i=1}^n \m{W}_i (\v{y}_i - \mathrlap{\m{F}_i \v\beta_i)}
&
&
\\
l &= \mathrlap{\sum_{i=1}^n (\v{y}_i - \m{F}_i \v\beta_i - \m{W}_i \v{\gamma})^T\!(\v{y}_i - \m{F}_i \v\beta_i - \m{W}_i \v{\gamma})}
&
&
\>
which need to be calculated in a distributed setting and broadcast to all workers.

Approximate asynchronous Gibbs can enable this computation to be performed fully in parallel by an arbitrarily large cluster, while reducing synchronization costs and improving fault tolerance.

To avoid calculations over the full data, we maintain a \emph{cache} of $\bar{\v\beta}$, $\m{S}$, $\v{g}$ and $l$.
To illustrate this, consider a new update of a single $\v\beta_i$.
When it is generated or received, the cache is updated by subtracting the portion of the sum corresponding to the old $\v\beta_i$ and adding the portion corresponding to the new value.
This significantly speeds up computation, but results in higher memory use.

Each worker updates $\v{\mu}$, $\m\Sigma$, $\v{\gamma}$, $\nu$ with the same probability as each individual element $\v\beta_i$.
With 20 workers and 1,000 iterations for each $\v\beta_i$, the algorithm generates 20(1,000) = 20,000 total samples for each variable.
This helps with mixing, improving accuracy.

For a fair performance comparison between approximate asynchronous Gibbs and standard Gibbs with multithreaded sampling of $\v\beta_{i=1, ..,n}$, we implemented a simple sequential-scan Gibbs sampler in \emph{Scala}, using the exact same multithreaded numerical routines as in our cluster sampler.
We used a data set size $n=1,000,000$, and ran for 1,000 Monte Carlo iterations per worker.
Runtime for the sequential-scan and asynchronous Gibbs samplers can be seen in Table \ref{tbl:vvb-runtime}, from which we see that asynchronous Gibbs was much faster, and scaled effectively to 20 worker nodes.

\begin{figure}
\begin{center}
\includegraphics{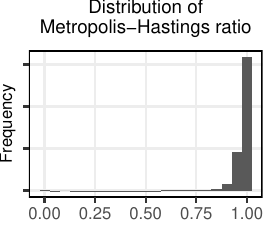}
\end{center}
\caption{Distribution of MH acceptance probability, in the hierarchical mixed-effects regression example of Section \ref{sec:ex3}.}
\label{fig:vvb-acc}
\end{figure}

Figure \ref{fig:vvb-acc} gives the distribution of the MH acceptance probabilities.
The probability of rejecting a random update is about 0.02, indicating that the behavior of the approximate algorithm is close to what the exact algorithm would have done -- this diagnostic is developed further in Section \ref{sec:jacobi}.
Both chains yielded similar diagnostic plots, which indicated issues with slow mixing.
Overall, the sequential-scan Gibbs sampler and asynchronous Gibbs appeared to have produced similar results.

It took substantially longer for $\nu$ to reach equilibrium with the asynchronous Gibbs sampler: this is a result of caching.
Before we implemented caching, the asynchronous Gibbs trace plot for $\nu$ looked similar to the sequential-scan trace plot, but the algorithm ran substantially slower due to time spent computing the relevant sum.
Trace plots for $\nu$ can be seen in Figure \ref{fig:vvb-trace} in Appendix \ref{apdx:vvb-plots}.
Note also that caching helps to ensure that all variables take a similar amount of time to sample, which is needed to ensure that the asynchronous chain is time-homogeneous -- an assumption of our analysis in Appendix \ref{apdx:conv}.

To evaluate whether the output produced by asynchronous Gibbs was meaningful, we compared it to the output produced by standard sequential-scan Gibbs sampling.
Both algorithms mix poorly, but produce similar distributional estimates for $\v\mu$ and $\v\gamma$, which were the primary unknowns of interest -- an important outcome in an unsupervised setting where cross-validation is not immediately available.
Thus the output of asynchronous Gibbs sampling was sufficient for these purposes, and in this problem the benefits of parallelism outweighed the additional implementation complexity.

\subsection{Jacobi sampling and approximate asynchronous Gibbs} \label{sec:jacobi}

\begin{figure*}
\includegraphics{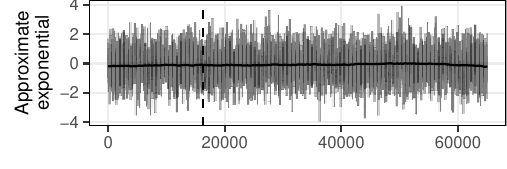}
\includegraphics{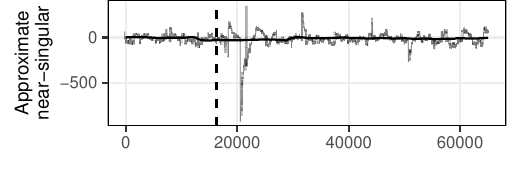}
\includegraphics{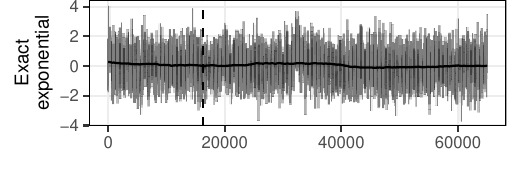}
\includegraphics{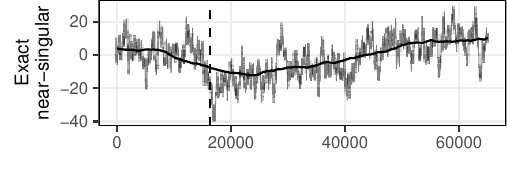}
\caption{Trace plots for the first component of $\v{x}$ for the exact and approximate variations of the asynchronous Gibbs sampler of Secton \ref{sec:jacobi} with simulated parallelism, under exponential and near-singular covariance matrices.}
\label{fig:jacobi-diag}
\end{figure*}

We now illustrate one way that asynchronous Gibbs sampling without a Metropolis-Hastings correction can fail.
This is example is originally due to \textcite{johnson13} -- we study it in the setting of exact asynchronous Gibbs, under simulated parallelism with fixed deterministic communication schemes.
Let $\pi(\v{x}) \~[N](\v{0}, \m\Sigma)$ be the target distribution.

Consider the following sampler with workers $(w_1, .., w_m)$, each of which updates one coordinate. Initialize arbitrary starting values and perform the following updates.
\1[(1)] Each $w_i$: update $x_{ii} \given \v{x}_{i,-i}$ in parallel.
\2 Each $w_i$: broadcast $x_{ii}$ to all other workers.
\0 
Here, no Metropolis-Hastings step is performed and all transmitted updates are accepted.
\textcite{johnson13} has shown that this sampling scheme does not converge for all $\m\Sigma$: it can diverge if the precision matrix $\m\Sigma^{-1}$ is not diagonally dominant.
In cases where it does converge, it's also possible for the algorithm to converge to the wrong target distribution.
We call this algorithm \emph{Jacobi sampling}, because the mean vector at each update is an iteration of the Jacobi algorithm for solving linear systems \cite{saad03} -- for the corresponding linear system, diagonal dominance suffices to ensure stability of the iterations.

We analyze the following target with $m=8$
\< \label{eqn:near-singular-matrix}
\m\Sigma^{-1} &= 1 + 0.01\m{I}
&
\m\Sigma &= \begin{cases}
\phantom{-}87.5 & i = j
\\
-12.5 & i\neq j
.
\end{cases}
\>
which we call the \emph{near-singular} covariance.
This is clearly a difficult target from the parallel sampling perspective, due to strong dependence between components.
Here, Jacobi sampling with a target that has the near-singular covariance matrix of Equation \eqref{eqn:near-singular-matrix} diverges.
For comparison, consider a correlated mean-zero 8-dimensional Gaussian, with unit exponential covariance $\Sigma_{ij} = \exp(-\phi|i - j|)$ with $\phi = 0.5$.

To study the approximate algorithm for this target, we modify the communication scheme so that the approximate algorithm numerically converges to the incorrect target, rather than diverging.
This ensures that the difference between approximate and exact algorithms is large enough to be interesting, but not so large that nothing can be said about it.

Suppose that there are 4 workers, each with 2 full conditionals assigned to them from our 8-dimensional Gaussian target.
Each worker selects one of its full conditionals at random, performs a Gibbs step, and transmits the resulting draw to each other worker with probability 0.75.
For the exact algorithm, the other workers then perform a Metropolis-Hastings calculation and either accept or reject the transmitted value.

We implemented both the exact and approximate versions of this variation with the near-singular covariance matrix on a single machine with simulated parallelism.
For comparison, we also ran the variation with an exponential covariance matrix.
Trace plots are given in Figure \ref{fig:jacobi-diag}.
Clearly the algorithm does far better with the exponential covariance.
The exact algorithm the near-singular covariance matrix mixes poorly, but ends up yielding a Monte Carlo mean and covariance matrix that are not too far away from the correct answer.
The approximate algorithm with the near-singular covariance matrix roughly yields the correct sample mean, but vastly incorrect sample covariance.

\begin{figure}[t!]
\includegraphics{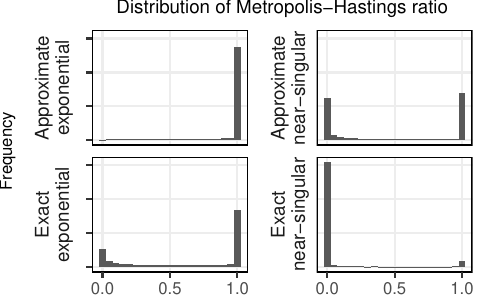}
\caption{Distribution of the Metropolis-Hastings acceptance ratio for both approximate and exact asynchronous Gibbs, with the algorithm of Section \ref{sec:jacobi}, and exponential and near-singular covariance matrices.}
\label{jacobi-acc}
\end{figure}

To further understand the differences between the exact and approximate algorithm, we examined the distribution of the MH acceptance ratios in all four examples -- these are shown in Figure \ref{jacobi-acc}.
In the case of the approximate algorithm this was accomplished by calculating and storing the MH probabilities and then ignoring them by accepting all updates.
This distribution was concentrated around 1 for the approximate exponential case.
It was substantially lower -- bimodal near 0 and 1 -- for the approximate Jacobi case that yielded the wrong answer.
Interestingly, the MH ratio distributions were also different when comparing both exact algorithms to their approximate counterparts.
This appears to be because the approximate chain undergoes \emph{phase transition} in the sense of \textcite{diaconis11}, making its behavior more akin to an optimization algorithm in those regions of the state space.

The intuition suggested by this example leads to the following diagnostic.

{
\renewcommand{\thetheorem}{A}
\begin{diagnostic} \label{diagnostic}
Approximate asynchronous Gibbs is reasonable if the distribution of the Metropolis-Hastings acceptance ratio in the approximate algorithm is concentrated around 1.
\end{diagnostic}
}

If the condition in Diagnostic \ref{diagnostic} is satisfied, the behavior of the approximate algorithm will be similar to that of the exact algorithm in the posterior regions that it explores.
Further work on approximate Markov chain theory is needed to formalize this intuition -- see Section \ref{sec:conclusion} for additional discussion.
To conclude, we provide the following heuristic for describing problems in which this is likely to occur.

{
\renewcommand{\thetheorem}{B}
\begin{heuristic} \label{heuristic}
Asynchronous Gibbs without Metropolis-Hastings correction produces a good approximation to the exact algorithm if all of the following hold:
\1[(i)] The target density $\pi$ does not possess too much dependence between its components.\label{B:1}
\2 The dimensionality of $\pi$ is significantly larger than the number of workers. \label{B:2}
\3 All transmitted variables are drawn via Gibbs steps.\label{B:3}
\0 
\end{heuristic}
}

We propose Heuristic \ref{heuristic} for the following reasons:
(\ref{B:1}) suggests that full conditional distributions in nearby posterior regions are similar, (\ref{B:2}) suggests that there is not too much movement happening at once, and (\ref{B:3}) suggests, given the previous two conditions, that the algorithm will consist of moves that are approximately Gibbs steps and hence should be accepted often.

Both Diagnostic \ref{diagnostic} and Heuristic \ref{heuristic} are intuitive tools designed to help practitioners use approximate asynchronous Gibbs in situations where it is likely to work well.
Future work is necessary to formalize these intuitions within the framework of Markov chain theory.

\section{Discussion} \label{sec:conclusion}

Asynchronous Gibbs sampling can allow Bayesian learning to be effectively implemented in parallel and distributed environments, and has been a popular choice in the topic modeling community \cite{newman09,ihler12}.
It can work well for models with a structure similar to the one found in the hierarchical mixed-effects regression example of Section \ref{sec:ex3} -- in which each data point maps to a parameter -- because the dependence in the posterior between almost all dimensions, for instance two vectors $\v\beta_i, \v\beta_j$ for $i \neq j$, is weak.
Models with strong posterior dependence will likely remain difficult for any Gibbs-based algorithm, because even in the sequential case, we expect poor mixing in that context.
One way around this would be to tailor blocking of the Gibbs sampler to the problem at hand.
For example, performance in the Gaussian process model in Section \ref{sec:ex2} could be improved by considering an overlapping block scheme, such as the \emph{additive Schwartz} method \cite{saad03}.

The theory of asynchronous Gibbs sampling can be further expanded.
While we focused on convergence, it would be useful to quantify the degree to which asynchronous delays affect the performance of the algorithm.
This is especially true for approximate asynchronous Gibbs -- we have found, surprisingly, that reducing communication latency can in some cases make performance worse.
The right amount of latency involves a balance: too much delay slows down mixing, but too little delay increases the bias introduced by ignoring the Metropolis-Hastings step.

Asynchronous Gibbs is not a Markov chain, which makes analysis non-trivial.
However, we believe that a more detailed understanding of the interplay between the convergence behavior of the asynchronous Gibbs stochastic process and its dependence on past states will be a useful step toward developing \emph{partially asynchronous} MCMC methods, which may mix better or possess other useful properties, and could potentially use asynchronous steps to hide latency during the global operations required for synchronization.
This would mirror recent advances in massively parallel iterative algorithms for solving linear systems \cite{ghysels14} and parameter-server-based distributed optimization \cite{ho13}.

Our analysis is largely complementary to the approach taken by \textcite{desa16}.
That work is based on assuming Dobrushin's condition \cite{pedersen07}, which limits their analysis to target distributions that do not exhibit too much dependence.
In contrast, our approach depends on Assumption \ref{asm:simultaneous}, which ensures all workers to converge to the target sufficiently quickly.
Both perspectives are useful: further work is needed to connect the two approaches, perhaps weakening these regularity requirements in the process.

Further work is also needed in understanding the quality of the algorithm's output, for instance by extending the standard effective sample size calculation to multiple dependent chains.
This would help verify the algorithm's output, particularly since Bayesian models are often used in unsupervised settings, such as our example in Section \ref{sec:ex3}, where algorithm-independent approaches to evaluating model quality and uncertainty, such as cross-validation, are difficult to deploy.

Implementation of asynchronous Gibbs is specific both to the problem being solved and to the hardware used -- in particular, it is necessary to decide how to divide the workload among all of the workers.
We found that different choices produced widely different mixing efficiencies -- in extreme cases, one worker can bottleneck the entire algorithm if it is sampling, at too slow a rate, a dimension upon which all other workers have strong dependence.
Similar issues can occur with respect to network traffic control: if one worker is producing output too fast, it can flood the network, preventing other workers from communicating with each other.
This is not solely an issue in complex problems -- at one point in time, due to a default \emph{Akka} configuration poorly suited to distributed computation, this difficulty manifested itself in a simple problem involving an 8-dimensional Gaussian.
Thus care was required to properly tune the algorithm in the problems we studied.

Our implementation is nowhere near optimal.
\emph{Akka} is designed for large-scale distributed web applications rather than high-performance computing.
This makes for convenient development, but does not yield the kind of low-level hardware control available in a framework such as \emph{MPI}.
Our cluster also was selected for convenience rather than performance -- indeed, the machines we used were physically located in data centers in three different US states.
This is an extremely high-latency environment from a high-performance computing perspective, and illustrates the algorithm's robustness.

These challenges are common to any nontrivial parallel computation scheme, where fully generic solutions are difficult.
Here, we have focused on studying asynchronous Gibbs under a common class of big-data Bayesian problems, for which the number of latent variables grows with the number of data points.
We find the results for hierarchical Bayesian models can mirror those of in LDA and topic modeling \cite{newman09,ihler12}, where the approach has long been popular.
Our construction with exact asynchronous Gibbs provides a view on why this algorithm has been successful in these areas.

\subsection*{Acknowledgments}

We are grateful to Chris Severs, Joseph Beynon, and Matt Gardner for many useful discussions and much help with the coding.
DS (and perhaps to a lesser extent AT and DD) would like to thank Christian Robert for sending him an early version of this paper that allowed him, during an otherwise uneventful trip to buy trousers, to remember a strategy of proof for asynchronous algorithms.
We are grateful to Matt Johnson for sending us the \emph{Jacobi sampling} example and thereby demonstrating that approximate asynchronous Gibbs is not exact, the subsequent analysis of which made much of this work possible.
We thank Christopher De Sa for sending us a different counterexample to exactness of the approximate algorithm on $\{0,1\}^3$.
We thank Murray Pollock for sending us an example demonstrating that violating time-homogeneity systematically can greatly reduce performance of the approximate algorithm.
We are grateful to Cyril Chimisov for pointing out a subtle mathematical error in our original proof.
We thank Herbie Lee for his ideas, which simplified the arguments used in our proof.
We are grateful to Gareth Roberts, Murray Pollock, Radford Neal, Daniele Venturi, Rajarshi Guhaniyogi, Tatiana Xifara, and Torsten Erhardt for interesting and useful discussions.
We also thank eBay, Inc. for providing the computational resources used for running our examples.
Membership on this list does not imply agreement with the ideas given here, nor are any of these people responsible for any errors that may be present.

\printbibliography

\newpage
\onecolumn
\appendix

\section{Appendix: convergence analysis} \label{apdx:conv}

Here we prove that, provided we start with a well-defined Markov chain, exact asynchronous Gibbs sampling will converge to the correct target distribution.
Note first that asynchronous versions of valid MCMC algorithms for 1 and 2-dimensional target distributions can be proven to always converge, because the random variables representing states of the algorithm can always be re-ordered to recover the Markov property -- see \textcite{terenin17b} for details.

Our strategy has two parts.
First, we define a serialized parallel MCMC algorithm that formalizes the way in which workers draw samples and communicate with one another under the assumption that communication is instantaneous, using ideas inspired by the coupling of chains in \emph{parallel tempering} \cite{swendsen86}.
Then, we note that MCMC methods belong to the class of \emph{fixed-point} algorithms, and hence we can use a result from the asynchronous convergence of these algorithms, due to \textcite{baudet78} and \textcite{bertsekas83}, to prove that the asynchronous version of the parallel algorithm with non-instantaneous communication converges as well.
We begin by defining the MCMC algorithm that we wish to parallelize.

\begin{definition}[Preliminaries]
Let $(\Omega,\s{F},\P)$ be a probability space.
Let $X$ be a Polish space, and let $\s{X}$ be its Borel $\sigma$-algebra.
Let $\c{M}_s(X)$ be the Banach space of signed measures on $X$, equipped with the total variation norm $\norm{\cdot}_{\f{TV}}$.
Let $\c{M}_1(X) \subset \c{M}_s(X)$ be the space of probability measures over $X$.
Let $\pi \in \c{M}_1(X)$ be the target measure.
\end{definition}

\begin{definition}[Underlying chain]
\label{def:underlying-chain}
Let $k\in\N$.
Define a Markov chain $\xi : \Omega \x \N \to X$, $(\omega,k) \mapsto \xi^k(\omega)$.
For all $B \in \s{X}$ and all $k\in\N$, let the regular conditional probability measure $P:\s{X}\x X \to \R$ defined by $P(B \given \xi) = \P(\xi^{k+1} \in B \given \xi^k = \xi)$ be its transition kernel.
This is well-defined, as the latter expression does not depend on $k$ by the Markov property and time-homogeneity.
Note that by definition of a regular conditional probability measure, for all $B\in\s{X}$ the map $\xi \mapsto P(B\given \xi)$ is $(X,\s{X})$-measurable, and for all $\xi\in X$ the map $B \mapsto P(B\given \xi)$ is a probability measure.
Define the Markov operator $P:\c{M}_1(X)\to\c{M}_1(X)$ by $(P(\mu))(B) = \int_X P(B\given \xi)\d\mu(\xi)$.
Assume that for all $\mu \in \c{M}_1(X)$, we have that $\norm{P^k(\mu) - \pi}_{\f{TV}} \to 0$ as $k \to \infty$.
We say that $\xi^k$ the \emph{underlying chain}.
\end{definition}

Here, $X$ is the parameter space for the given problem, $\mu$ is the initial measure, $\pi$ is the target measure, $k$ is the current iteration of the chain, and $P$ is the Markov operator for the chain we wish to parallelize, which we assume converges to $\pi$ in total variation.
Our analysis will center on the relationships between the workers' Markov chains, and we now introduce the definitions needed to consider this formally.

\begin{definition}[Instantaneous parallel chain] \label{def:sync_chain}
Let $m\in\N$.
Let $\c{X} = \bigtimes_{i=1}^m X$, equipped with its product $\sigma$-algebra.
Let $L$ be any index set, and let $\{\xi^k_{(l)} : l \in L\}$ be a set of underlying chains.
Let $x:\Omega\x\N\to\c{X}$, $(\omega,k) \mapsto x^k(\omega)$ be a Markov chain such that for any $x \in \c{X}$ and any $i \in \{1,..,m\}$ there exists an $l\in L$ such that for any $B \in \s{X}$ we have $\P(x^{k+1}_i \in B \given x^k = x) = \P(\xi^{k+1}_{(l)} \in B \given \xi^k_{(l)} = x_i)$.
We say that $x^k$ is the \emph{instantaneous parallel chain}.
\end{definition}

Here, $\c{X}$ is the state space for the entire compute cluster's computation.
Since we have assumed temporarily that communication is instantaneous, this means that the entire cluster's computation is also a Markov chain.
We assume that this much larger Markov chain is made up of individual components representing the worker nodes.
We also assume that each worker node performs a Markov update based on two components, namely its previous state $x_i \in X$, and a choice of proposal distribution indexed by a parameter $l\in L$, whose value can depend on the state of other workers.

\begin{example}[Instantaneous parallel Gibbs sampler]
Take $X = \R^d$ and $\c{X} = \R^{d\times m}$.
For all $i \in \{1,..,m\}$, let $C_i \subseteq \{1,..,d\}$ such that $\U_{i=1}^m C_i = \{1,..,d\}$.
Assume that $\pi$ admits an absolutely continuous density $f$ with respect to the Lebesgue measure.
Let $\m{X}$ be a Markov chain defined on $\R^{d\times m}$ as follows.
\1 Select an index $s \in \{1,..,m\}$ uniformly at random.
\2 Select a coordinate $j \in C_s$ uniformly at random.
\3 Randomly draw $x'_{sj}$ from $f(x_{sj} \given \v{x}_{s,-j})$.
\4 For all $i$, set $x_{ij}$ at the next iteration to $x'_{sj}$ with probability
\[ \label{acceptance-probability-1}
\alpha_i = \min\cbr{1,\frac{f(x'_{sj}, \v{x}_{i,-j}) \, f(x_{ij} \given \v{x}_{s,-j})}{f(x_{ij}, \v{x}_{i,-j}) \, f(x'_{sj} \given \v{x}_{s,-j})}}
\]
and set it to $x_{ij}$ otherwise.
\0 
\end{example}

This chain describes how exact asynchronous Gibbs sampling would behave under instantaneous communication, with no asynchronous delays and all messages sent and received.
It selects a worker at random and proposes from that worker's full conditional at every worker.
Note that $\alpha_{i'} = 1$, because on worker whose full conditional is selected, the proposal is exactly a Gibbs step and is hence always accepted.

It is easily seen that this is, in fact, an instantaneous parallel chain, because at every iteration it performs a Metropolis-Hastings transition with respect to some proposal distribution determined by the current state of another worker.
Thus, we can take $L$ to be the set of all such proposal distributions.

At this stage, we don't yet know anything about the stationarity properties of the instantaneous parallel chain, due to the expanded state space.
Indeed, depending on how parameters are partitioned and the details of how workers communicate, which at this stage have been abstracted out of the problem, this chain can be reducible, making stationarity analysis non-trivial.
We would therefore like to avoid speaking about the joint distribution of the chain altogether, and instead only study the marginal distributions at every worker.
To do this, we introduce a notion of coupling.

\begin{definition}[Marginally coupled Markov operator]
\label{def:m-coupled-operator}
Let $x^k$ be an instantaneous parallel chain.
Let $E = \bigtimes_{i=1}^m \c{M}_1(X)$.
For $\eps, \varpi \in E$, define the metric $d(\eps, \varpi) = \sum_{i=1}^m ||\eps_i - \varpi_i||_{\f{TV}}$.
For all $i\in\{1,..,m\}$ and all $B, B_i \in \s{X}$, define the map
\<
H_i: \s{X} \x \c{X} &\to \R
&
H_i(B \given x_1,..,x_m) = \P(x_i^{k+1} \in B \given x^k_1 = x_1,..,x^k_m=x_m)
\>
and the operator
\<
H: E &\to E
&
(H_i(\eps))(B_i) &= \int_\Omega..\int_\Omega H_i(B_i\given x_1,..,x_m) \d\eps_1(x_1) .. \d\eps_m(x_m)
.
\>
Call $H$ the \emph{marginally coupled Markov operator} for the instantaneous parallel chain.
\end{definition}

Here, $E$ is a space in which each $\eps \in E$ represents the distributional state of the entire cluster.
The marginally coupled Markov operator $H$ -- analogous to the underlying chain's Markov operator $P$ -- captures how the cluster transitions from one state to the next probabilistically, while only tracking marginal distributions rather than the full joint.
This means that we are only analyzing whether each worker converges to the target distribution, and ignoring any dependence between workers.
To continue, we need an assumption.

\begin{assumption}[Simultaneous worker-wise contraction]
\label{asm:simultaneous}
Let $\mu \in \c{M}_1(X)$. Consider the marginally coupled Markov transition kernel $H_i$.
Recall that for any fixed set of values $x^{-i}_{1:m} = \{x_j : j \in \{1,..,m\}, j \neq i\}$, by Definition \ref{def:sync_chain} there exists an $l\in L$ such that $H_i$ is the Markov transition kernel of an underlying chain $\xi_{(l)}^k$.
Let $P_{(l)}$ be the Markov operator of that chain.
Assume that for all $l$ and all $i$ there exists a $\rho < 1$ such that
\[
\norm[1]{P_{(l)}(\mu) - \pi}_{\f{TV}} \leq \rho\norm{\mu - \pi}_{\f{TV}}
.
\]
\end{assumption}

This is a condition on how quickly each worker's chain converges to the target posterior with respect to the behavior of the other workers in the cluster.
It says that, regardless of what the other workers are doing, no worker can proceed at an arbitrarily slow rate of convergence.
We use the term \emph{simultaneous} to emphasize that uniformity is only required with respect to workers, rather than other quantities such as initial conditions of the chain, as is typical in uniform ergodicity and related conditions.
It is through this assumption that properties of the communication scheme, such as how frequently workers transmit their messages to one another, enter the theory.
Whether or not the assumption will hold for a given Gibbs sampler will depend on properties of the target distribution.
Note that at this stage, communication is still instantaneous, and asynchronicity properties such as message delays do not yet enter the theory.
These will be considered later.

\begin{proposition}[Coupled convergence]
\label{prop:sync-gibbs-conv}
Let $\Pi = \bigtimes_{i=1}^m \pi \in E$.
For any instantaneous parallel chain, we have that $H(\Pi) = \Pi$.
Furthermore for all $\eps \in E$ and all $i \in \{1,..,m\}$, the function $||H^k_i(\eps) - \pi||_{\f{TV}}$ is non-increasing in $k$, and we have
\[
d(H^k(\eps), \Pi) \to 0
\]
as $k\to\infty$.
\end{proposition}
\begin{proof}
By additivity of $d$, it suffices to show that all three claims hold for each $H_i$, so fix an arbitrary $i \in \{1,..,m\}$. We have for all $B \in \s{X}$ that
\[
(H_i(\Pi))(B) = \int_\Omega..\int_\Omega H_i(B \given x_1,..,x_m) \d\pi(x_1) .. \d\pi(x_m)
.
\]
Since $H_i$ is non-negative, we may use Tonelli's Theorem to switch the order of integration so that $x_i$ is the inner-most component being integrated.
We then have
\[
\int_\Omega H_i(B \given x_1,..,x_m) \d\pi(x_i) = \pi(B)
\]
because for all $x_1,..,x_m$ except $x_i$, there exists an $l\in L$ and an underlying chain $\xi^k_{(l)}$ with Markov operator $P_{(l)}$ for which we have $H_i = P_{(l)}$.
This gives the first claim.
Next, we check that $||H^k_i(\eps) - \pi||_{\f{TV}}$ is non-increasing in $k$, as well as convergence.
We have that
\<
d(H(\eps), \Pi) &= \sum_{i=1}^m \norm{ \int_\Omega..\int_\Omega H_i(\cdot \given x_1,..,x_m) \d\eps_1(x_1) .. \d\eps_m(x_m) - \pi }_{\f{TV}}
\\
&= \sum_{i=1}^m \norm{ \int_\Omega..\int_\Omega H_i(\cdot \given x_1,..,x_m) \d\eps_i(x_i) \smash{\underbracket[0.5pt]{\d\eps_1(x_1)..\d\eps_m(x_m)}_{\text{except }\d\eps_i(x_i)}} - \int_\Omega..\int_\Omega \pi \, \smash{\underbracket[0.5pt]{\d\eps_1(x_1)..\d\eps_m(x_m)}_{\text{except }\d\eps_i(x_i)}} \, }_{\f{TV}}
\\
&= \sum_{i=1}^m \norm{ \int_\Omega..\int_\Omega P_{(l)}(\eps_i) - \pi \, \smash{\underbracket[0.5pt]{\d\eps_1(x_1)..\d\eps_m(x_m)}_{\text{except }\d\eps_i(x_i)}} \, }_{\f{TV}}
\\
&\leq \sum_{i=1}^m \int_\Omega..\int_\Omega \norm{ P_{(l)}(\eps_i)  - \pi}_{\f{TV}} \smash{\underbracket[0.5pt]{\d\eps_1(x_1)..\d\eps_m(x_m)}_{\text{except }\d\eps_i(x_i)}}
\\
&< \sum_{i=1}^m \int_\Omega..\int_\Omega \rho\norm{ \eps_i  - \pi}_{\f{TV}} \, \smash{\underbracket[0.5pt]{\d\eps_1(x_1)..\d\eps_m(x_m)}_{\text{except }\d\eps_i(x_i)}}
\\
&= \sum_{i=1}^m \rho\norm{ \eps_i  - \pi}_{\f{TV}} \int_\Omega..\int_\Omega \smash{\underbracket[0.5pt]{\d\eps_1(x_1)..\d\eps_m(x_m)}_{\text{except }\d\eps_i(x_i)}}
\\
&= \rho \, d(\eps, \Pi)
.
\>
Here, the second line follows from Tonelli's Theorem since $H_i$ is non-negative, and since the integral of each $\eps_j$ is equal to 1.
The third line follows by linearity and the definition of $P_{(l)}$ in Assumption \ref{asm:simultaneous}.
The fourth line follows from definition of $||\cdot||_{\f{TV}}$, because the supremum of an integral is less than the integral of the supremum.
The fifth line follows from Assumption \ref{asm:simultaneous}.
The sixth line follows because each $\eps_i$ is a probability measure and thus integrates to one.
The last line follows by definition.
Since $\rho < 1$, convergence follows from the Banach fixed point theorem.
\end{proof}

The set of Markov kernels $\{P_{(l)},l\in L\}$, whose properties underly the above analysis, can be viewed as an \emph{adaptive MCMC} algorithm.
From this perspective, the first part of our argument is similar to Proposition 1 of \textcite{roberts07}, and the second part is similar to their Theorem 5, where our Assumption \ref{asm:simultaneous} is similar to their condition (a).

We now move to the second stage of the proof.
From here, we want to show that $H$ converges asynchronously, i.e., convergence is still valid in the setting in which each worker does not necessarily know the precise current state of all other workers, and instead works with the latest state that it knows about.
We begin by stating the \textcite{baudet78, bertsekas83, frommer00} model of distributed computation, within which we base our analysis.

\begin{definition}[Asynchronous computation] \label{def:async-comp}
Start with the following fixed-point computation problem.
\begin{enumerate}[(P1)]
\item Let $E = \bigtimes_{i=1}^m E_i$ be a product space, where $i$ indexes workers. We take $E_i = \c{M}_1(X)$.  \label{a:P1}
\item Let $H : E \to E$ be a function with components $H_i$.\label{a:P2}
\item Let $\Pi = H(\Pi)$ be a fixed point of $H$.
\label{a:P3}
\end{enumerate}

Now, define the following cluster computation model:
\begin{itemize}
\item[--] Let $k \in \N_0$ be the total number of iterations performed by all workers.
\item[--] Let $s_i(k) \in \N_0$ be the total number of iterations on component $i$ by all workers.
\item[--] Let $I_k$ be an index set containing the components updated at iteration $k$.
\end{itemize}

Next, assume the following basic regularity conditions on the cluster:
\begin{enumerate}[(R1)]
\item No worker's state is based on future values: $s_i(k) \leq k-1$. \label{a:R1}
\item No worker stops permanently: $\lim_{k \to \infty} s_i(k) = \infty$. \label{a:R2}
\item No component stops being updated or communicated by workers: $|\{k \in \N : i \in
I^k\}| = \infty$. \label{a:R3}
\end{enumerate}

Finally, define $\eps^k$ component-wise via the following:
\[ \label{asynchronous-computation-1}
\eps_i^k =
\begin{cases}
H_i \! \del[1]{\eps_1^{s_1(k)},..,\eps_m^{s_m(k)}} \text{ if } i \in I^k
,
\\
\eps_i^{k-1} \text{ otherwise.}
\end{cases}
\]
Then $\eps^k$ is termed an \emph{asynchronous iteration}, and $\{\eps^k : k \in \N_0\}$ is termed an \emph{asynchronous computation}.
\end{definition}

Definition \ref{def:async-comp} is broad enough to encompass most asynchronous computations, and it is at this stage that properties such as message delay enter the theory.
With this computational model in mind, the following general theorem gives a sufficient set of conditions under which the asynchronous iterates $\eps^k$ converge to the correct answer.

\begin{result}[Convergence of asynchronous computations] \label{res:async-conv}
Given a well-defined asynchronous computation as in Definition \ref{def:async-comp}, assume the following conditions hold for all $k \in \N_0$:
\begin{enumerate}[(C1)]
\item There are sets $E^k \subseteq E$ satisfying $E^k = \bigtimes_{i=1}^m E_i^k$ \emph{(box condition)}. \label{a:C1}
\item For $E^k$ in (C\ref{a:C1}), $H(E^k) \subseteq E^{k+1} \subseteq E^k$ \emph{(nested sets condition)}. \label{a:C2}
\item There exists a $\Pi$ such that $\eps \in E^k \implies \eps \to \Pi$ in some metric \emph{(synchronous convergence condition)}.\label{a:C3}
\end{enumerate}

Then $\eps^k \to \Pi$ in the same metric.
\end{result}

\begin{proof}
\textcite{baudet78, bertsekas83, frommer00}.
\end{proof}

For MCMC, the main challenge in using this result is that an arbitrary measure space is not a product space -- to avoid this, we instead work with Definition \ref{def:m-coupled-operator}.
We now proceed to verify its conditions.

\begin{lemma}[Box condition] \label{lemma:box}
Fix the initial distribution $\eps \in E$.
Define the following:
\[ \label{asynchronous-computation-3}
E^k = \cbr{ \varpi \in E : ||\varpi_i - \pi||_{\f{TV}} \leq ||H^k_i (\eps) - \pi||_{\f{TV}} \text{ for all } i \in \{1,..,m\} }
.
\]
Then there exist sets $E_i^k$ such that $E^k = \bigtimes_{i=1}^m E_i^k$.
\end{lemma}

\begin{proof}
Take $E_i^k = \{ \mu \in \c{M}_1(X) : ||\mu - \pi||_{\f{TV}} \leq ||H_i^k(\eps) - \pi||_{\f{TV}}\}$.
\end{proof}

\begin{lemma}[Nested sets condition] \label{lemma:nested}
Let $E^k$ be defined as in the previous lemma.
Then $H(E^k) \subseteq E^{k+1} \subseteq E^k$.
\end{lemma}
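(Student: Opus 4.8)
The plan is to prove the two set inclusions packaged in $H(E^k) \subseteq E^{k+1} \subseteq E^k$ separately. Throughout I would write $r_k \idtb ||H^k(\mu) - \Pi||_{\f{TV}}^\cross$ for the radius defining $E^k$, and lean on two ingredients: that $\Pi$ is the (unique) stationary distribution of the synchronous parallel operator $H$, established in Proposition \ref{thm:synchronous_parallel}, so that $H(\Pi) = \Pi$; and the standard fact that a Markov operator is non-expansive in total variation, which, since $||\cdot||_{\f{TV}}^\cross$ is by definition the sum of the coordinatewise total-variation metrics, transfers verbatim to the cross metric.

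The inclusion $E^{k+1} \subseteq E^k$ is purely a statement about the radii: any $\nu$ with $||\nu - \Pi||_{\f{TV}}^\cross \leq r_{k+1}$ satisfies $||\nu - \Pi||_{\f{TV}}^\cross \leq r_k$ provided $r_{k+1} \leq r_k$, in which case $\nu \in E^k$. The needed inequality $r_{k+1} \leq r_k$ is exactly the Monotonic Convergence result (Result \ref{thm:monotone_convergence}) applied with $P = H$ and initial distribution $\mu$, in the $||\cdot||_{\f{TV}}^\cross$ form noted in that result's proof; this is licensed because $H$ has $\Pi$ as its unique stationary distribution. So this inclusion is essentially immediate.

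The substantive inclusion is $H(E^k) \subseteq E^{k+1}$. The natural first move is non-expansiveness: for any $\nu \in E^k$, using $H(\Pi) = \Pi$,
\begin{equation}
||H(\nu) - \Pi||_{\f{TV}}^\cross = ||H(\nu) - H(\Pi)||_{\f{TV}}^\cross \leq ||\nu - \Pi||_{\f{TV}}^\cross \leq r_k \, .
\end{equation}
This already yields $H(E^k) \subseteq E^k$, which, combined with the first inclusion, gives the self-containment demanded by the box framework.

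The hard part will be upgrading the bound from $r_k$ to $r_{k+1}$, i.e.\ showing $||H(\nu) - \Pi||_{\f{TV}}^\cross \leq r_{k+1}$ rather than merely $\leq r_k$; this is the crux of the lemma. Non-expansiveness controls the image of the ball only by the radius of the \emph{domain} ball, whereas $r_{k+1} = ||H(H^k(\mu) - \Pi)||_{\f{TV}}^\cross$ records the contraction realized along the single orbit direction $H^k(\mu) - \Pi$, which can be strictly smaller than the worst-case contraction over all directions of norm $r_k$. To close the gap I would exploit the linearity of $H$ on signed measures, writing $H(\nu) - \Pi = H(\nu - \Pi)$ and $H^{k+1}(\mu) - \Pi = H(H^k(\mu) - \Pi)$, and argue that the orbit difference is the extremal direction, so that the supremum of $||H(\nu) - \Pi||_{\f{TV}}^\cross$ over $\nu \in E^k$ is governed by $\nu = H^k(\mu)$; here the product structure from Lemma \ref{lemma:box} and the explicit Metropolis--Hastings form of $H$ from Definition \ref{def:sync_chain} would have to do the work. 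If that extremality is not available directly, the fallback is either to redefine $E^{k+1}$ as (the closure of) $H(E^k)$, making the nested condition hold by construction while non-expansiveness supplies $E^{k+1} \subseteq E^k$, or to strengthen the hypothesis to a uniform Dobrushin-type contraction whose coefficient matches the orbit rate.
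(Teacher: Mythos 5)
Your handling of the easy inclusion is exactly the paper's: monotone convergence (Result \ref{thm:monotone_convergence}, applied to $H$ with stationary distribution $\Pi$ and the cross metric) makes the radii $r_k = ||H^k(\mu) - \Pi||_{\f{TV}}^\cross$ non-increasing, hence $E^{k+1} \subseteq E^k$. Your non-expansiveness argument giving $H(E^k) \subseteq E^k$ is also sound. But your proposal stops exactly where it would need to become a proof: you never establish $H(E^k) \subseteq E^{k+1}$, and the escape routes you sketch do not close it. Extremality of the orbit direction is false for a general Markov operator: the contraction of $H$ along the particular signed measure $H^k(\mu) - \Pi$ can be strictly faster than its worst-case contraction over the ball of radius $r_k$ (think of a kernel whose invariant ``directions'' contract at different rates, with $\mu$ aligned to a fast one), in which case some $\nu \in E^k$ has $||H(\nu) - \Pi||_{\f{TV}}^\cross > r_{k+1}$. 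A uniform Dobrushin-type coefficient matching the orbit rate is an added hypothesis the paper does not make. So, as written, you have proved nestedness and self-mapping, but not the lemma.

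For what it is worth, the paper's own proof handles this step by fiat: after obtaining $E^{k+1} \subseteq E^k$ from monotonicity exactly as you do, it simply declares that $E^{k+1} = H(E^k)$ ``by construction.'' That is your fallback (b), not a derivation from the ball definition in Lemma \ref{lemma:box}: the image of a total-variation ball under $H$ need not be a ball, so the displayed definition of $E^{k+1}$ and the identity $E^{k+1} = H(E^k)$ cannot both be taken as definitions -- and if one instead redefines $E^{k+1}$ to be $H(E^k)$, then the product (box) structure asserted in Lemma \ref{lemma:box} has to be re-verified for these image sets, which the paper does not do. Your diagnosis of the crux is therefore accurate, and the gap you were unable to close is present in the printed argument as well; you have not missed an idea that the paper supplies. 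The honest conclusion is that neither your proposal nor the paper's proof establishes $H(E^k) \subseteq E^{k+1}$ from the stated definitions, and repairing this requires either a genuinely different construction of the sets $E^k$ or additional quantitative assumptions on $H$.
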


\begin{proof}
By Proposition \ref{prop:sync-gibbs-conv}, $||H^k_i (\eps) - \pi||_{\f{TV}}$ is non-increasing in $k$ for each $i$, so $E^{k+1} \subseteq E^k$, and
$E^{k+1} = H(E^k)$ by construction.
\end{proof}

\begin{theorem}[Asynchronous convergence]
Asynchronous Markov chains in the sense of Definition \ref{def:sync_chain} and Definition \ref{def:async-comp} satisfying Assumption \ref{asm:simultaneous} converge to $\pi$ on each worker in total variation.
\end{theorem}

\begin{proof}
We verify that all of the conditions required in Result \ref{res:async-conv} hold.
\begin{enumerate}[-----------]
\item[(P1--P3)] Take $E, H, \Pi$ as in Definition \ref{def:sync_chain}.
\item[(R1--R3)] All satisfied by assumption.
\item[(C1--C3)] Satisfied by Lemma \ref{lemma:box}, Lemma \ref{lemma:nested}, and Proposition \ref{prop:sync-gibbs-conv}.
\end{enumerate}
The claim follows.
\end{proof}

\section{Appendix: details of Gibbs sampler and approximate analytic matrix inversion in the Gaussian process example}
\label{apdx:matrixinv}

We propose the following scheme to sample from the posterior of $(\mu, \sigma^2, \tau^2, \v\theta)$.
We update individual slices of $\v\theta$, consisting of 500 elements, via Gibbs steps.
To do this, we sample from full conditional distributions of the form $\v\theta_{1:500 } \given \v\theta_{501:n}, \mu, \sigma^2,$ $\tau^2$ for arbitrary blocks of 500 adjacent indices -- recall that $\phi$ is fixed.
Thus we need to sample from conditional Gaussian distributions of portions of $\v\theta$, given the rest of $\v\theta$.
To do this without ever constructing the large covariance matrix, which may be too big to store in memory, we need to be able to invert $\m{K}$, add $\sigma^{ -2 } \, \m{I}_n$, and invert back.
The following scheme allows us to do this element-wise, with only one approximate inversion along the way, which can with further work likely be refined into an exact inversion.

Since we have made the simplifying assumption that our grid is evenly spaced, the normalized covariance matrix $\tau^{-2}\m{K}$ is Toeplitz.
Additionally, since our covariance function is exponential, the resulting covariance matrix is hyperbolic, and can be inverted element-wise analytically via a technique due to \textcite{dow03}, with inverse that simplifies to

\< \label{eqn:inverse-1}
\tau^2\m{K}^{-1} &=
\begin{bmatrix}
d_0 & a & 0 & \dots & \dots & \dots & 0 \\
a & b & a & 0 & \ddots & \ddots & \vdots \\
0 & a & b & \ddots & \ddots & \ddots & \vdots \\
\vdots & 0 & \ddots & \ddots & \ddots & 0 & \ddots \\
\vdots & \ddots & \ddots & \ddots & b & a & 0 \\
\vdots & \ddots & \ddots & 0 & a & b & a \\
0 & \hdots & \dots & \dots & 0 & a & d_0
\end{bmatrix}
&
& \begin{cases}
b = -\f{coth}(-\phi \rho)
\\
a = \displaystyle\frac{\f{csch}(-\phi \rho)}{2}
\\[1.5ex]
d_0 = \displaystyle\frac{e^{-\phi \rho (2N-3)} \f{csch}(-\phi \rho) + 1 - \f{coth}(-\phi \rho)}{2 - 2 e^{-\phi \rho (2N-3)}}
\\
\rho = \text{grid spacing size} = 0.06
\\
N = \text{dimension of }\m{K}
.
\end{cases}
\>

Note that this $\m{K}^{-1}$ is tridiagonal with modified corner elements.
While this technique limits the generality of our Gaussian process prior, more complicated ways of avoiding large matrix inversions are available with modern spatial priors such as nearest neighbor Gaussian processes \cite{banerjee12}.
If we had not fixed $\phi$, we would have needed to compute a large matrix expression involving $\m{K}^{-1}$ in its entirety for every sample of $\tau^2$ and $\mu$.
Here, this is tractable, but we opted to avoid it for simplicity.

After we add $\sigma^{-2}$ to the diagonal, the resulting covariance matrix is still tridiagonal with modified corner elements.
We do not know how to invert this matrix analytically, but we do know how to invert the general tridiagonal Toeplitz matrix without modified corner elements, via a technique due to \textcite{hu96}.
We approximate the tridiagonal form by assuming that $d_0 = b$ in Equation \eqref{eqn:inverse-1} -- this works well except at the points where the partition slices of $\v\theta$ join, where a small amount of error is introduced.

Finally, to find the mean vector, we need to multiply the covariance matrix defined by Equation \eqref{eqn:inverse-1} by a term that includes the full data.
This multiplication can be carried out to arbitrary precision by simply taking a slice in the center of the matrix in a neighborhood around the full conditional of interest, avoiding use of the full data.
This idea also underlies \emph{covariance tapering} \cite{furrer06} and \emph{composite likelihood methods} for spatial problems \cite{stein04}.
After all of these steps, we can sample any slice of $\v\theta$ full conditionally via the standard Schur complement formula, since the full conditional of a Gaussian is Gaussian.

\section{Appendix: trace plots for hierarchical mixed-effects model of Section \ref{sec:ex3}} \label{apdx:vvb-plots}

\begin{figure*}[h]
\includegraphics[width=0.5\textwidth]{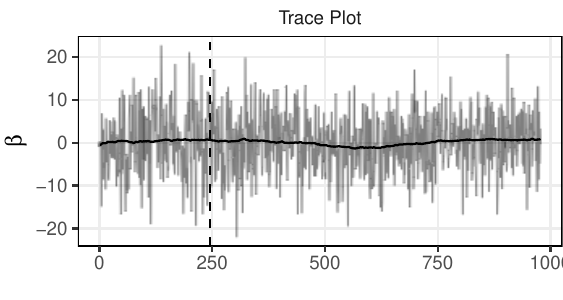}
\includegraphics[width=0.5\textwidth]{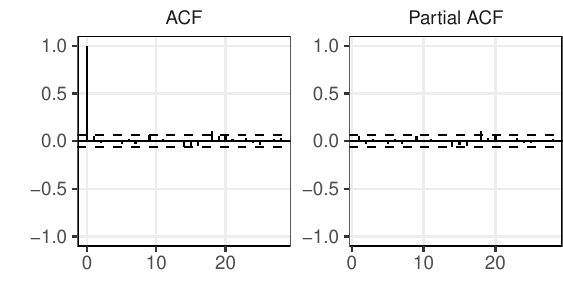}
\includegraphics[width=0.5\textwidth]{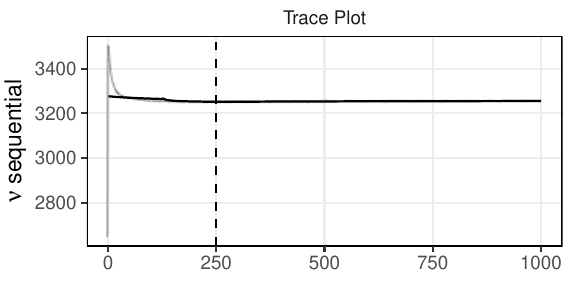}
\includegraphics[width=0.5\textwidth]{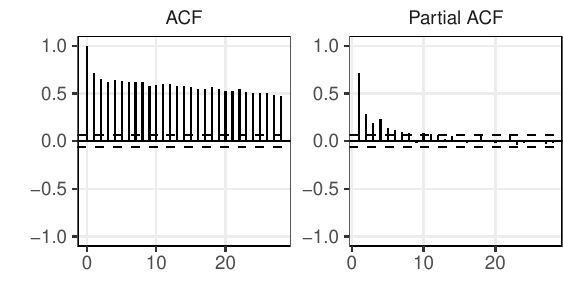}
\includegraphics[width=0.5\textwidth]{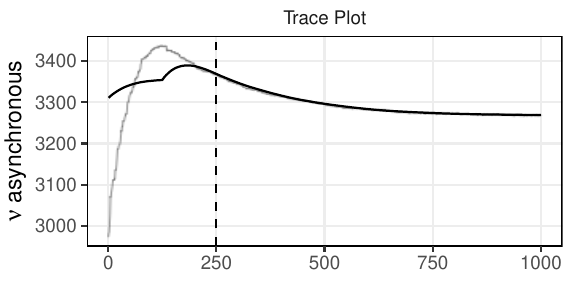}
\includegraphics[width=0.5\textwidth]{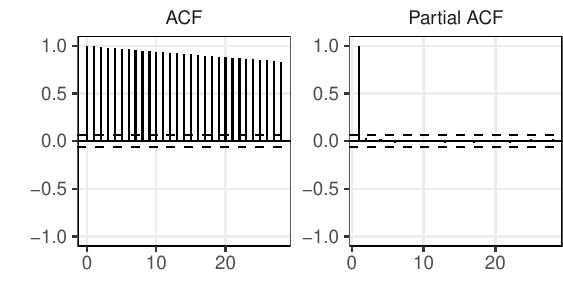}
\caption{Trace plots and autocorrelation plots for an unspecified $\v\beta_i$ component for the asynchronous Gibbs sampler, and of $\nu$ for the asynchronous and sequential-scan Gibbs samplers in Section \ref{sec:ex3}.}
\label{fig:vvb-trace}
\end{figure*}

\end{document}